\newcommand{\argmax}{\operatornamewithlimits{argmax}}
\newtheorem{thm}{Theorem}[section] 
\newtheorem{lem}{Lemma}[section]
\newtheorem{cor}{Corrolary}[section]
\renewcommand{\algorithmicrequire}{\textbf{Input:}}
\renewcommand{\algorithmicensure}{\textbf{Output:}}
\begin{document}
\title{Thresholding-based reconstruction of compressed correlated signals}
\author{Alhussein Fawzi, Tamara To\v{s}i\'{c} and Pascal Frossard}
\date{}
\maketitle

\begin{abstract}
We consider the problem of recovering a set of correlated signals (\emph{e.g.,} images from different viewpoints) from a few linear measurements per signal. We assume that each sensor in a network acquires a compressed signal in the form of linear measurements and sends it to a joint decoder for reconstruction. We propose a novel joint reconstruction algorithm that exploits correlation among underlying signals. Our correlation model considers geometrical transformations between the supports of the different signals. The proposed joint decoder estimates the correlation and reconstructs the signals using a simple thresholding algorithm. We give both theoretical and experimental evidence to show that our method largely outperforms independent decoding in terms of support recovery and reconstruction quality.
\end{abstract}

\section{Introduction}\label{sec:intro}

The growing number of distributed systems in recent years has led to an important body of work on the efficient representation of signals captured by multiple sensors. Recently, ideas based on Compressed Sensing (CS) \cite{Donoho06, Candes05} have been applied to distributed reconstruction problems \cite{Duarte05} in order to recover signals from a few measurements per sensor. 
When signals are correlated, a joint decoder that properly exploits the inter-sensor dependencies is expected to outperform independent decoding in terms of reconstruction quality. 
Very often, the correlation model restricts the unknown signals to share a common support. Using this correlation model, the authors in \cite{Duarte05} and \cite{Golbabaee09} propose decoding algorithms and show analytically that joint reconstruction outperforms independent reconstructions. In many applications, this correlation model is however too restrictive. For example, in the case of a network of neighbouring cameras capturing one scene or seismic signals captured via different sismometers, the supports of the signals are quite different even if components are linked by simple transformations. 

In this paper, we adopt a more general correlation model and build a joint decoder that recovers the unknown signals from a few measurements per sensor. We assume that the unknown signals are sparse in a \emph{redundant dictionary} $\mathcal{D}$, and not necessarily in an orthonormal basis \cite{Rauhut08,Candes10}. We denote the components of the dictionary as \emph{atoms}. 
We assume that the support of each view $j$ is related to the support of a reference view by a transformation $T_j^*$. 
The transformation $T_j^*$ could be for example a translation function.  Using the given correlation model, we build a joint decoder based on the thresholding algorithm \cite{Rauhut08} and prove theoretically that it outperforms an independent decoding method in terms of recovery rate. Moreover, we show experimentally that the proposed algorihm leads to better reconstruction quality.
\section{Problem formulation}\label{sec:problem}
We consider a sensor network of $J$ nodes. Each sensor $j$ acquires $M$ linear measurements of the unknown signal $y_j \in \mathbb{R}^N$ ($M < N$) and sends it to a central decoder . The role of the decoder is to estimate the unknown signals $\mathcal{Y} = \{y_j\}_{j=1}^J$. By denoting $\mathcal{S} = \{s_j\}_{j=1}^J$ the set of compressed signals acquired by the sensors and $\mathcal{A} = \{A_j\}_{j=1}^{J}$ the sensing matrices, we have:
\begin{equation}
\underbrace{s_j}_{M \times 1} = \underbrace{A_j}_{M \times N} \underbrace{y_j}_{N \times 1}.
\end{equation}
In the rest of this paper, we use independent sensing matrices with Gaussian i.i.d entries. Specifically, $\sqrt{M} (A_j)_{m, n}$ follows a standard Gaussian distribution, for any $m,n,j$.

We assume that the unknown signals $y_j \in \mathbb{R}^N$ are sparse in some dictionary $\mathcal{D}$ that consists of $K$ atoms and denote by $\Phi = [\varphi_1 | \dots | \varphi_K]$ its matrix representation. Formally, we have $y_j = \Phi c_j$, where $c_j$ is a vector of length $K$ with at most $S$ non zero components and $S < N$. 
By denoting the support of $y_j$ with $\Delta_j^*$ (\emph{i.e.,} the set of $S$ atoms corresponding to the non zero entries of $c_j$), $y_j$ can be written as follows:
\begin{equation}
\label{signal_model}
\underbrace{y_j}_{N \times 1} = \underbrace{\Phi_{\Delta_j^*}}_{N \times S} \underbrace{x_j}_{S \times 1},
\end{equation} 
where $\Phi_{\Delta_j^*}$ is the restriction of $\Phi$ to $\Delta_j^*$ and $x_j$ corresponds to the non zero entries of $c_j$.

We adopt the following correlation model: for any $j \in \{1, \dots, J\}$, the set of atoms in $\Delta_j^*$ can be obtained from $\Delta_1^*$ by applying a transformation $T_j^*:\mathcal{D}\rightarrow\mathcal{D}$. This can be written as: $T_j^*(\Delta_1) = \Delta_j^*$ (we consider that $T_1^*$ is the identity). 
In our problem, the vector of transformations $T^* = \{T_j^*\}_{j=1}^J$ is unknown. 
However, we assume that we are given a finite set $\mathcal{T}$ of candidate transformations vectors and that the correct vector $T^*$ belongs to $\mathcal{T}$.


Considering the above correlation model, we address the following problem: Given the compressed signals $\mathcal{S}$, the sensing matrices $\mathcal{A}$, the sparsity $S$, the dictionary $\Phi$, and the set of candidate transformations vectors $\mathcal{T}$, estimate the unknown signals $\mathcal{Y}$ (\emph{i.e.,} supports $\{\Delta_j^*\}_{j=1}^J$ and coefficients $\{x_j\}_{j=1}^J$) using a small number of measurements per sensor $M$. 

\section{Joint thresholding algorithm}\label{sec:JT}
We propose a solution to the problem formulated in the previous section.
Our proposed decoder extends the simple thresholding algorithm \cite{Rauhut08} to multiple signals. This choice is motivated by the low complexity of thresholding algorithm with respect to other decoding methods \cite{XChen09}. Our joint decoder represents an efficient alternative when the signals are simple (\emph{i.e.,} they have very sparse representations in the dictionary) and the number of sensors is fairly large, so that other decoding methods become computationally intractable.

The Joint Thresholding (JT) decoder exploits the information diversity brought by the different signals to reduce the number of measurements per sensor required for accurate signals reconstruction. It groups the measurements obtained from each individual signal and precisely estimates the unknowns $(\Delta_1^*, T^*)$ (or equivalently all the supports $\{\Delta_j^*\}_{j=1}^{J}$).

JT obtains an estimate $(\widehat{\Delta_1}, \widehat{T})$ of $(\Delta_1^*, T^*)$ by maximizing the following objective function, which is called the score function:
\begin{equation}
\label{scoreFunction}
\Psi_s(\Delta_1, T) = \sum\limits_{j=1}^{J} \sum\limits_{\varphi \in T_j(\Delta_1)} s_j \cdot A_j \varphi,
\end{equation}
where $\Delta_1$ and $T$ denote respectively the support of the reference signal and the vector of transformations variables and the operator $\cdot$ denotes the canonical inner product. 
The use of $\Psi_s$ as the objective function is justified by:
\begin{align*}
\vspace{-1mm}
(\widehat{\Delta_1}, \widehat{T}) & = \argmax_{(\Delta_1, T)} \Psi_s(\Delta_1, T) \\ & \approx \argmax_{(\Delta_1, T)} \Psi_y(\Delta_1, T) \\ & = (\Delta_1^*, T^*),
\vspace{-1mm}
\end{align*}
where $\Psi_y(\Delta_1, T) = \sum_{j=1}^J \sum_{\varphi \in T_j(\Delta_1)} y_j \cdot \varphi = \mathbb{E} \Psi_s$ ($\mathbb{E}$ denotes the expected value of a random variable).
Indeed, if both assumptions given in Eq.(\ref{assumptionThresholding}) and Eq.(\ref{positivity}) hold\footnote{The assumptions are discussed in the next section.},
$\Psi_y(\Delta_1, T)$ is maximal for $\Delta_1 = \Delta_1^*$ and $T = T^*$. Besides, for large values of $M$, $\Psi_s(\Delta_1, T)$ concentrates around its average value $\Psi_y(\Delta_1, T)$ (Lemma \ref{fundLemma}).
The description of JT algorithm is given in Algorithm \ref{alg:jt}.
\begin{algorithm}[ht]
\caption{Joint Thresholding (JT) Algorithm}
\algorithmicrequire{\hspace{1mm} compressed signals $\{s_j\}$, sensing matrices $\{A_j\}$, sparsity $S$, dictionary $\Phi$, candidate vectors of transformations $\mathcal{T}$} \\
\algorithmicensure{\hspace{1mm} estimated signals $\{\widehat{y_j}\}$, support $\widehat{\Delta_1}$ and vector of transformations $\widehat{T}$}
\label{alg:jt}
\begin{algorithmic}
\STATE $\mathbf{1.}$ Initialization: $(\widehat{\Delta_1}, \widehat{T}, \widehat{\Psi}) \leftarrow (\varnothing, \varnothing, -\infty)$.
\STATE $\mathbf{2.}$ For every $T \in \mathcal{T}$
\STATE \hspace{2mm}$\mathbf{2.1}$ Build the vector $d_T$ of length $K$ in the following way:
	\begin{equation*}
		d_T = \sum\limits_{j=1}^J (A_j T_j (\Phi))^\mathrm{T} s_j,
	\end{equation*}
\STATE	where $T_j (\Phi) = \begin{bmatrix} T_j(\varphi_1) \ldots T_j(\varphi_K) \end{bmatrix}$.
\STATE \hspace{2mm}$\mathbf{2.2}$ Keep the largest $S$ entries in $d_{T}$ and set the other entries to zero. The positions of the non zero entries in $d_{T}$ give the indices of the estimated support $\Delta_1$ of the first signal.
\STATE \hspace{2mm}$\mathbf{2.3}$ Calculate the \emph{score} $\Psi_s(\Delta_1, T)$ by summing the $S$ non zero entries of $d_{T}$.
\STATE \hspace{2mm}$\mathbf{2.4}$ If $\Psi_s(\Delta_1, T)$ exceeds $\widehat{\Psi}$: update $(\widehat{\Delta_1}, \widehat{T}, \widehat{\Psi}) \leftarrow (\Delta_1, T, \Psi_s(\Delta_1, T))$
\STATE $\mathbf{3.}$ Build the coefficients vector for all $j \in \{1, \dots, J\}$:
	\begin{equation*}
		\widehat{x_j} = \left( A_j \Phi_{ \widehat{\Delta_j} } \right)^{+} s_j.
	\end{equation*}
	where $(\cdot)^{+}$ denotes the pseudo-inverse operator. Note that $\widehat{\Delta_j}$ is obtained using the correlation model: $\widehat{\Delta_j} = \widehat{T_j}(\widehat{\Delta_1})$, for $j \geq 2$.
\STATE $\mathbf{4.}$ Obtain signals estimates:
		\begin{equation*}
		\widehat{y_j} = \Phi_{ \widehat{\Delta_j} } \widehat{x_j}.
	\end{equation*}
\end{algorithmic}
\end{algorithm}

In words, the JT algorithm calculates for each transformation vector $T \in \mathcal{T}$ the vector $d_T$, whose entries are given by \mbox{$d_T[i] = \sum\limits_{j=1}^J s_j \cdot A_j T_j (\varphi_i)$}, for $1 \leq i \leq K$. Then, the largest $S$ elements in $d_T$ are summed and assigned to $\Psi_s(\Delta_1, T)$. Estimated quantities $\{\widehat{\Delta_1}, \widehat{T}\}$ are updated if $\Psi_s(\Delta_1, T)$ achieves a higher score. Knowing the set of supports, we deduce coefficients $\widehat{x_j}$ by computing the least squares solution to equation $A_j \Phi_{\widehat{\Delta_j}} \widehat{x_j} = s_j$.

\section{Theoretical analysis}\label{sec:bounds}
\label{quantitative}

Our theoretical analysis focuses on the performance of JT in finding the correct supports.
Hence, we will not address the quality of the estimated coefficients $\{\widehat{x_j}\}_{j=1}^J$.  
In particular, we focus on the analysis of the \emph{recovery rate} $R$ defined as the total number of correctly recovered atoms (in all signals combined) divided by the total number of atoms (\emph{i.e.,} $SJ$). 


We assume the following: 
\begin{itemize}
\item For all $j \in \{1, \dots, J\}$, $\Delta_j^*$ can be recovered entirely by applying the thresholding algorithm on $y_j$.
Formally, there exists $\eta > 0$ verifying:
\begin{equation}
\label{assumptionThresholding}
\inf_{\varphi \in \Delta_j^*} \left| \frac{y_j}{\|y_j\|_2} \cdot \varphi \right| > \sup_{\varphi \in \overline{\Delta_j^*}} \left| \frac{y_j}{\|y_j\|_2} \cdot \varphi \right| + \eta,
\end{equation}
where $\overline{\Delta_j^*}$ is the complement of $\Delta_j^*$ in $\mathcal{D}$.
As this condition is practically hard to verify, a sufficient condition involving the coherence of the dictionary is given in \cite[Eq.(3.2)]{Rauhut08}.
\item All the atoms in the supports have positive inner products with the corresponding signal:
\begin{equation}
\label{positivity}
\forall \varphi \in \Delta_j^*, y_j \cdot \varphi \geq 0
\end{equation}
\end{itemize}
The assumption in Eq.(\ref{assumptionThresholding}) is reasonable since we cannot hope to recover the supports using JT unless the thresholding algorithm correctly recovers the supports when applied on the full signals $y_j$. Assumption in Eq.(\ref{positivity}) is a technical one and is used in the proof of our main theorem. Intuitively, it guarantees that $\Psi_y(\Delta_1, T) = \mathbb{E} \Psi_s(\Delta_1, T)$ is maximal when $(\Delta_1, T) = (\Delta_1^*, T^*)$. This assumption can be achieved by adding the inverse of the atom in the dictionary ($\varphi \rightarrow - \varphi$) when the inner product is negative. 

The main ingredient we will use in our analysis is the concentration of $\frac{1}{J} \sum_{j=1}^J A_j u_j \cdot A_j v_j$ around its average value $\frac{1}{J} \sum_{j=1}^J u_j \cdot v_j$ for any set of vectors $\{u_j\}_{j=1}^J$ and $\{v_j\}_{j=1}^J$ of length $N$. This is shown in the following lemma:

\begin{lem}
\label{fundLemma}
Let $\{u_j\}_{1 \leq j \leq J}$ and $\{v_j\}_{1 \leq j \leq J}$ with $u_j, v_j \in \mathbb{R}^N$, such that $\|u_j\|_2 \leq B_u$ and $\|v_j\|_2 \leq B_v$ for all $j \in \{1, \dots, J\}$.
Assume that $\{A_j\}_{1 \leq j \leq J}$ are independent random matrices of dimension $M \times N$, with iid entries following $\mathcal{N}(0, \frac{1}{M})$.
Then, for all $\tau > 0$,
\begin{align}
\mathbb{P} \left(\frac{1}{J} \left| \sum\limits_{j = 1}^J A_j u_j \cdot A_j v_j - u_j \cdot v_j \right| \geq \tau \right)
\leq 2 \exp \left ( - \frac{J M \tau^2}{C_1 B_u^2 B_v^2 + C_2 \tau B_u B_v} \right).
\end{align}
with $C_1=\frac{8e}{\sqrt{6 \pi}}$ and $C_2=2\sqrt{2}e$.
\end{lem}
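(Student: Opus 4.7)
The plan is to recast the centered sum as a sum of $JM$ independent sub-exponential random variables, bound the moment generating function of each summand, and invoke a Bernstein-type Chernoff argument.

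First I would exploit the independence of the rows of each $A_j$. Expanding
\[
(A_j u_j)\cdot(A_j v_j) = \sum_{m=1}^{M} Z_{jm} W_{jm}, \qquad Z_{jm}:=(A_j u_j)_m,\ W_{jm}:=(A_j v_j)_m,
\]
the pairs $(Z_{jm},W_{jm})$ are mutually independent across $(j,m)$ and each pair is a centered Gaussian vector in $\mathbb{R}^2$ with covariance
\[
\frac{1}{M}\begin{pmatrix} \|u_j\|_2^2 & u_j\cdot v_j\\ u_j\cdot v_j & \|v_j\|_2^2 \end{pmatrix}.
\]
In particular $\mathbb{E}[Z_{jm}W_{jm}] = \tfrac{1}{M}\, u_j\cdot v_j$, so summing over $m$ recovers $u_j\cdot v_j$, and the quantity inside the probability equals $\tfrac{1}{J}|S|$ with $S := \sum_{j,m} X_{jm}$ and $X_{jm} := Z_{jm}W_{jm} - \mathbb{E}[Z_{jm}W_{jm}]$, a sum of $JM$ independent centered random variables.

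Next I would quantify the sub-exponential behaviour of each $X_{jm}$. By Cauchy--Schwarz and the Gaussian moment identity $\mathbb{E}[Z^{2k}] = \sigma^{2k}(2k-1)!!$, combined with $\mathrm{Var}(Z_{jm}) \leq B_u^2/M$ and $\mathrm{Var}(W_{jm}) \leq B_v^2/M$,
\[
\mathbb{E}\bigl[|Z_{jm}W_{jm}|^k\bigr] \leq \Bigl(\tfrac{B_uB_v}{M}\Bigr)^{k}(2k-1)!!,
\]
and centering only costs an extra factor $2^k$. Stirling then yields a Bernstein-type MGF bound $\mathbb{E}[e^{\lambda X_{jm}}] \leq \exp(c_1 \lambda^2 B_u^2B_v^2/M^2)$ valid for $|\lambda| \leq M/(c_2 B_u B_v)$. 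An alternative that often gives sharper constants is to polarize, $ZW = \tfrac14\bigl((Z+W)^2 - (Z-W)^2\bigr)$, reducing $X_{jm}$ to an affine combination of centered scaled-$\chi^2_1$ variables whose MGFs are known in closed form; carefully tracking the constants along this route is what ultimately delivers the announced $C_1 = 8e/\sqrt{6\pi}$ and $C_2 = 2\sqrt{2}\,e$.

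Finally, by independence $\mathbb{E}[e^{\lambda S}] \leq \exp(c_1 J\lambda^2 B_u^2 B_v^2/M)$ on the same $\lambda$-interval. Applying the Chernoff inequality to each tail with $\lambda$ taken at the interior optimum $\lambda^{\star} = M\tau/(2 c_1 B_u^2 B_v^2)$ (valid in the Gaussian regime $\tau \lesssim B_uB_v$) and at the boundary $\lambda^{\star} = M/(c_2 B_u B_v)$ (in the exponential regime) produces, after a union bound on the two tails,
\[
\mathbb{P}(|S| \geq J\tau) \leq 2\exp\!\left(-\frac{JM\tau^2}{C_1 B_u^2 B_v^2 + C_2 \tau B_u B_v}\right),
\]
which is the claim. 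The structural part of the argument is a textbook instantiation of Bernstein's inequality for independent sub-exponential summands; the only delicate step, and where I expect the main work to lie, is steering every numerical constant through the MGF computation so as to land on exactly the $C_1$ and $C_2$ stated.
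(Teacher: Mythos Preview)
Your decomposition and overall strategy coincide with the paper's: both write the centered deviation as $\sum_{j=1}^{J}\sum_{m=1}^{M} Z_{j,m}$ with the $Z_{j,m}$ independent, centered, order-2 Gaussian chaos variables, and then apply a Bernstein-type tail bound. The only difference is in how that last step is executed. The paper does not bound moments or MGFs by hand and does not run the Chernoff optimization; it simply invokes a ready-made Bernstein inequality for order-2 Gaussian chaos (Theorem~A.1 in \cite{Rauhut08}), which directly gives
\[
\mathbb{P}\Bigl(\Bigl|\sum_{j,m}Z_{j,m}\Bigr|\ge JM\tau\Bigr)\le 2\exp\!\left(-\tfrac12\,\frac{JM\tau^2}{w+z\tau}\right),\qquad w=\tfrac{2e}{\sqrt{6\pi}}\max_{j,m}\mathbb{E}[Z_{j,m}^2],\quad z=e\sqrt{\max_{j,m}\mathbb{E}[Z_{j,m}^2]}.
\]
Hence the only computation the paper actually performs is the variance $\mathbb{E}[Z_{j,m}^2]=\|u_j\|_2^2\|v_j\|_2^2+(u_j\cdot v_j)^2\le 2B_u^2B_v^2$, from which $C_1=8e/\sqrt{6\pi}$ and $C_2=2\sqrt{2}e$ fall out immediately. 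Your route via explicit moment estimates or polarization followed by Chernoff is sound and more self-contained, but it re-derives exactly what the cited theorem already packages; if you want the stated constants without the bookkeeping you flag as ``the only delicate step,'' quoting that result and computing the second moment is the shorter path.
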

The proof of this lemma can be found in Appendix A.

Our main theoretical result is given in the following theorem:

\begin{thm}[Recovery rate of JT]
\label{mainTheorem}
Let $R$ be the recovery rate of JT defined by:
\begin{align*}
R = \frac{\sum_{j=1}^J |\Delta_j^* \cap \widehat{\Delta_j}|}{SJ}.
\end{align*}
Then, for any $0 < \alpha \leq 1$:
\begin{align}
\label{mainResult}
\mathbb{P} \left( R \geq 1 - \alpha \right) \geq 1 - 4 S J K |\mathcal{T}| \exp \left(- C M J \eta^2 \alpha^2 \frac{m_y^2}{M_y^2} \right),
\end{align}
where $m_y = \min_{j} \|y_j\|_2$, $M_y = \max_{j} \|y_j\|_2$, $C = \left(\frac{32 e}{\sqrt{6\pi}} + 4e \sqrt{2} \right)^{-1}$.
\end{thm}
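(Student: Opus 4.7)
The plan is to split the argument into a deterministic gap bound for $\Psi_y$ and a uniform concentration bound of $\Psi_s$ around $\Psi_y$, and then to leverage the optimality of $(\widehat{\Delta_1}, \widehat{T})$ as a maximizer of $\Psi_s$. The deterministic step quantifies how much any candidate $(\Delta_1, T)$ loses in $\Psi_y$ relative to $(\Delta_1^*, T^*)$, purely in terms of its recovery rate $R(\Delta_1, T) = \frac{1}{SJ}\sum_{j=1}^J |\Delta_j^* \cap T_j(\Delta_1)|$; the probabilistic step uses Lemma~\ref{fundLemma} applied coordinate-wise to the scalars $d_T[k] = \sum_j s_j \cdot A_j T_j(\varphi_k)$ to show that $\Psi_s$ stays close to $\Psi_y$ at every candidate simultaneously. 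Together these force the event $\{R < 1-\alpha\}$ to be contained in a failure-of-concentration event whose probability decays exponentially in $JM$.

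For the deterministic gap, the contributions on $\Delta_j^* \cap T_j(\Delta_1)$ cancel in the subtraction, leaving
\begin{equation*}
\Psi_y(\Delta_1^*, T^*) - \Psi_y(\Delta_1, T) = \sum_{j=1}^J \left[ \sum_{\varphi \in \Delta_j^* \setminus T_j(\Delta_1)} y_j \cdot \varphi \; - \sum_{\varphi \in T_j(\Delta_1) \setminus \Delta_j^*} y_j \cdot \varphi \right].
\end{equation*}
The positivity assumption (Eq.~\ref{positivity}) lets me write $y_j \cdot \varphi = \|y_j\|_2\, |(y_j/\|y_j\|_2) \cdot \varphi|$ on $\Delta_j^*$, and the thresholding assumption (Eq.~\ref{assumptionThresholding}) then lower-bounds this by $\|y_j\|_2(\sigma_j + \eta)$, where $\sigma_j := \sup_{\varphi' \notin \Delta_j^*} |(y_j/\|y_j\|_2) \cdot \varphi'|$ also upper-bounds $|y_j \cdot \varphi|/\|y_j\|_2$ off the support. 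Writing $K_j := |\Delta_j^* \cap T_j(\Delta_1)|$, the two symmetric-difference sums have equal cardinality $S - K_j$, so the $\sigma_j$-contributions cancel exactly and
\begin{equation*}
\Psi_y(\Delta_1^*, T^*) - \Psi_y(\Delta_1, T) \geq \eta \sum_{j=1}^J (S - K_j)\,\|y_j\|_2 \geq \eta\, m_y\, S J\, (1 - R(\Delta_1, T)).
\end{equation*}
Hence $R(\widehat{\Delta_1}, \widehat{T}) < 1 - \alpha$ would force a $\Psi_y$-gap strictly greater than $\eta\, m_y\, SJ\, \alpha$.

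For the probabilistic step, I apply Lemma~\ref{fundLemma} coordinate-wise to $d_T[k]/J - \mathbb{E} d_T[k]/J$ with $u_j = y_j$ (so $B_u = M_y$) and $v_j = T_j(\varphi_k)$ (unit norm, so $B_v = 1$). A union bound over the $K|\mathcal{T}|$ pairs $(T,k)$ produces an event $E_\tau$ of probability at least $1 - 2K|\mathcal{T}|\exp\bigl(-JM\tau^2/(C_1 M_y^2 + C_2 \tau M_y)\bigr)$ on which $|\Psi_s(\Delta_1, T) - \Psi_y(\Delta_1, T)|/J \leq S\tau$ holds uniformly in $(\Delta_1, T)$. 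Combining with $\Psi_s(\widehat{\Delta_1}, \widehat{T}) \geq \Psi_s(\Delta_1^*, T^*)$ gives $\Psi_y(\Delta_1^*, T^*) - \Psi_y(\widehat{\Delta_1}, \widehat{T}) \leq 2SJ\tau$, which contradicts the deterministic gap as soon as $\tau < \eta m_y \alpha/2$. Setting $\tau = \eta m_y \alpha/2$ and using $\eta m_y \alpha \leq M_y$ to bound the denominator by $(C_1 + C_2/2)M_y^2$ recovers the exponent and the constant $C = (4C_1 + 2C_2)^{-1}$ of the theorem. The main obstacle is the uniform control: a naive union bound over all $\binom{K}{S}|\mathcal{T}|$ size-$S$ supports is prohibitive, and I sidestep it by routing through the $K|\mathcal{T}|$ single-coordinate quantities $d_T[k]$, at the price of absorbing one factor of $S$ inside the slack between $\Psi_s$ and $\Psi_y$.
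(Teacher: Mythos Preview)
Your proof is correct and follows essentially the same approach as the paper: a deterministic $\Psi_y$-gap derived from assumptions (\ref{assumptionThresholding}) and (\ref{positivity}), combined with a coordinate-wise application of Lemma~\ref{fundLemma} and a union bound over the $K|\mathcal{T}|$ pairs $(T,\varphi_k)$. Your organization via a single uniform-concentration event $E_\tau$ is in fact slightly cleaner than the paper's (which decomposes by the exact error count $k$ and sums over $k$), and it yields the tighter prefactor $2K|\mathcal{T}|$ in place of the paper's $4SJK|\mathcal{T}|$, which of course still implies the stated bound.
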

The proof of this theorem can be found in Appendix B.

For simplicity, we consider the common case where all the signals have the same energy ($m_y = M_y$). Theorem \ref{mainTheorem} shows that for sufficiently high values of $J$, the recovery rate is mainly governed by $MJ$, $\eta$ and $|\mathcal{T}|$. The dependence on $MJ$ (\emph{i.e.,} total number of measurements) follows our intuition as JT combines the measurements of the different sensors to perform the joint decoding. Increasing the total number of measurements leads to a better recovery rate. The quantity $\eta$ hides the dependence of $R$ on the signal characteristics and model. For clarification, the following inequality provides a lower bound on $\eta$, in terms of sparsity, coherence of the dictionary and ratio between the lowest to largest coefficients: 
\begin{align*}
\eta^2 \geq \min_{j} \frac{\left( \frac{|x_{\text{min,j}}|}{\|x_j\|_{\infty}} - \mu_1(S-1) - \mu_1(S) \right)^2}{S(1 + \mu_1(S-1))},
\end{align*}
where $\mu_1$ defines the cumulative coherence (Babel function) as defined in \cite{Tropp04} and $|x_{\text{min,j}}|$ is the absolute value of the smallest coefficient in vector $x_j$. Note that if $\Phi$ is an ONB, $\mu_1 = 0$. The proof of this inequality is very similar to the proof of Corollary 3.3 in \cite{Rauhut08}.

Another key parameter is the number of candidate vectors of transformations $|\mathcal{T}|$ which grows with $J$.
In the following corollary, we provide a lower bound on the number of measurements needed per sensor to reach asymptotically a perfect recovery rate in the following two cases: (1) $\mathcal{T}$ grows slowly with $J$, (2) $\mathcal{T}$ grows exponentially with $J$.

\begin{cor}[Asymptotic behaviour of $R$]
\label{corrolary}
Let $0 < \alpha \leq 1$.
\begin{enumerate}
	\item If $|\mathcal{T}|$ is a subexponential function of $J$, then, as long as $M \geq 1$, $\mathbb{P} (R \geq 1 - \alpha)$ converges to 1 as $J \rightarrow +\infty$.
	\item If there exists $\beta > 0$ such that $|\mathcal{T}| \sim e^{\beta J}$ , then, as long as $M > \frac{\beta}{C \eta^2 \alpha^2} \frac{M_y^2}{m_y^2}$, $\mathbb{P} (R \geq 1 - \alpha)$ converges to 1 as $J \rightarrow +\infty$.
\end{enumerate}
\end{cor}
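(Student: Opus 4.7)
The plan is to start from the failure-probability bound provided by Theorem \ref{mainTheorem}, namely
\begin{align*}
\mathbb{P}(R < 1-\alpha) \leq 4SJK|\mathcal{T}|\exp\!\left(-CMJ\eta^2\alpha^2\frac{m_y^2}{M_y^2}\right),
\end{align*}
and to show that the right-hand side tends to $0$ as $J\to\infty$ in each of the two regimes. The cleanest way is to take the logarithm of the bound and compare growth rates in $J$: the log equals
\begin{align*}
\log(4SK) + \log J + \log|\mathcal{T}(J)| - CMJ\eta^2\alpha^2\frac{m_y^2}{M_y^2},
\end{align*}
where $S$, $K$, $\eta$, $\alpha$, $m_y/M_y$ are treated as constants in $J$. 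The task reduces to showing this quantity diverges to $-\infty$.

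For part (1), assume $|\mathcal{T}|$ is subexponential in $J$, i.e.\ $\log|\mathcal{T}(J)|=o(J)$. Under $M\geq 1$ the negative term grows at least as $CJ\eta^2\alpha^2 m_y^2/M_y^2$, which is linear in $J$ and therefore dominates both $\log J$ and the subexponential contribution $\log|\mathcal{T}(J)|$. Hence the log diverges to $-\infty$, the bound tends to zero, and $\mathbb{P}(R\geq 1-\alpha)\to 1$.

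For part (2), with $|\mathcal{T}|\sim e^{\beta J}$ we have $\log|\mathcal{T}(J)|=\beta J + o(J)$. The linear-in-$J$ terms then combine to
\begin{align*}
\left(\beta - CM\eta^2\alpha^2\frac{m_y^2}{M_y^2}\right)J,
\end{align*}
and this coefficient is strictly negative precisely when $M>\beta/(C\eta^2\alpha^2)\cdot M_y^2/m_y^2$, which is the hypothesis. The sub-leading $\log J$ term is absorbed, so the log again diverges to $-\infty$ and the bound vanishes.

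There is no real obstacle here: both parts are a direct asymptotic comparison of a linear term $-\Theta(MJ)$ against $\log|\mathcal{T}(J)|$ and a $\log J$ correction, so the only care needed is to invoke $M\geq 1$ in part (1) and to derive the threshold $\beta/(C\eta^2\alpha^2)\cdot M_y^2/m_y^2$ from a direct coefficient comparison in part (2). Everything else follows by exponentiating and passing to the complement $\mathbb{P}(R\geq 1-\alpha)=1-\mathbb{P}(R<1-\alpha)$.
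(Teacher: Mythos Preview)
Your proposal is correct and is precisely the ``direct application of Theorem~\ref{mainTheorem}'' that the paper invokes as its entire proof: you take the failure-probability bound, pass to logarithms, and compare the linear term $-CMJ\eta^2\alpha^2 m_y^2/M_y^2$ against $\log|\mathcal{T}(J)|+\log J$. The paper gives no further detail, so your write-up is in fact more explicit than the original.
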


\begin{proof}
The proof is a direct application of Theorem \ref{mainTheorem}.
\end{proof}

The growth of $|\mathcal{T}|$ is related to the degree of uncertainty on the correct transformation vector $T^*$. For example, if $T^*$ is known in advance, then $|\mathcal{T}| = 1$ and Corrolary \ref{corrolary} guarantees an arbitrary high recovery rate with only one measurement per sensor when $J \rightarrow +\infty$. This result remains valid as long as $|\mathcal{T}| \ll e^{\beta J}$ for all $\beta > 0$. However, if $T^*$ is completely unknown and transforms between pairs of signals are independent, $|\mathcal{T}|$ grows exponentially with $J$ and we will need more measurements per sensor in order to recover the correct support estimates (consider the example where (a) the number of candidate transformations between each sensor $j \geq 2$ and the reference signal is equal to $l$ ; (b) $T_j$ is independent of $T_{j-1}$, then: $|\mathcal{T}| = l^{J-1}$).

Unlike independent thresholding which has a constant recovery rate in function of $J$, previous results show that the recovery rate of JT \emph{increases} by augmenting the number of sensors $J$. Thus, in large networks, JT requires less measurements per sensor than independent thresholding for a fixed target recovery rate provided that $|\mathcal{T}|$ has a controlled growth.

\section{Experimental results}\label{sec:experiments}

\subsection{Greedy JT}

The JT algorithm, as described in Section \ref{sec:JT}, performs the search over all candidate transforms in $\mathcal{T}$. This can be very costly in terms of the computational efficiency, especially for a large number of correlated signals. Thus, instead of performing a full search, we greedily look for the relevant transformations. The Greedy Joint Thresholding (GJT) algorithm is described in Algorithm \ref{alg:gjt}.

\begin{algorithm}[ht]
\caption{Greedy Joint Thresholding (GJT) Algorithm}
\algorithmicrequire{\hspace{1mm} compressed signals $\{s_j\}$, sensing matrices $\{A_j\}$, sparsity $S$, dictionary $\Phi$, candidate vectors of transformations $\mathcal{T}$} \\
\algorithmicensure{\hspace{1mm} estimated signals $\{\widehat{y_j}\}$, support $\widehat{\Delta_1}$ and vector of transformations $\widehat{T}$}
\label{alg:gjt}
\begin{algorithmic}
\STATE $\mathbf{1.}$ Initialization: $\widehat{T} \leftarrow \mathbb{I}$ (identity).
\STATE $\mathbf{2.}$ For every $V \in \{2, \dots, J\}$
\STATE \hspace{2mm}$\mathbf{2.1}$ Set the values ($\widehat{\Delta_{1}}, \widehat{T_V}, \widehat{\Psi}) \leftarrow (\varnothing, \varnothing, -\infty)$.
\STATE \hspace{2mm}$\mathbf{2.2}$ Let $\mathcal{T}_{V}$ denote the possible transformations between signal 1 and signal $V$.
\STATE \hspace{2mm}$\mathbf{2.3}$ For each element $T_V \in \mathcal{T}_{V}$ do
\STATE \hspace{4mm}$\mathbf{2.3.1}$ Let $T \leftarrow [\widehat{T}, T_V]$ (\emph{i.e.,} $T_V$ appended to $\widehat{T}$)
\STATE \hspace{4mm}$\mathbf{2.3.2}$ Compute:
\begin{equation*}
	d_{T} = \sum\limits_{j=1}^V (A_j T_j (\Phi))^\mathrm{T} s_j.
\end{equation*}
\STATE \hspace{4mm}$\mathbf{2.3.3}$ Keep the largest S entries (set the other entries to zero). The positions of the non zero entries in $d_T$ give the indices of the estimated support $\Delta_1$.
\STATE \hspace{4mm}$\mathbf{2.3.4}$ Calculate the \emph{score} $\Psi_s(\Delta_1, T)$ by summing the $S$ non zero entries of $d_{T}$
\STATE \hspace{4mm}$\mathbf{2.3.5}$ If $\Psi_s(\Delta_1, T)$ exceeds $\widehat{\Psi}$: update $(\widehat{\Delta_1}, \widehat{T_V}, \widehat{\Psi}) \leftarrow (\Delta_1, T_V, \Psi_s(\Delta_1, T))$
\STATE \hspace{2mm}$\mathbf{2.4}$ Update the estimate of the vector of transformations: $\widehat{T} \leftarrow [\widehat{T}, \widehat{T_V}]$.
\STATE $\mathbf{3.}$ Perform steps 3. and 4. in Algorithm \ref{alg:jt}
\end{algorithmic}
\end{algorithm} 

Even though this algorithm has a lower complexity than JT, the price to pay is a less robust transformation estimation process: in the early stages of the algorithm ($V \ll J$), the selection of the transform is based on a small number of signals $V$. If in addition the value of $M$ is small, this may lead to uncorrect estimation of the transformations and thus wrong support estimates.

In Fig.\ref{fig:greedy_fullsearch}, we plot the percentage of uncorrect estimated transforms with JT and Greedy JT in function of $M$, for a randomly generated image. For $M \geq 80$, the performance loss with Greedy JT is relatively small with respect to the gain in complexity. 
\begin{figure}[h!]
\centering
\includegraphics[width=0.4\columnwidth]{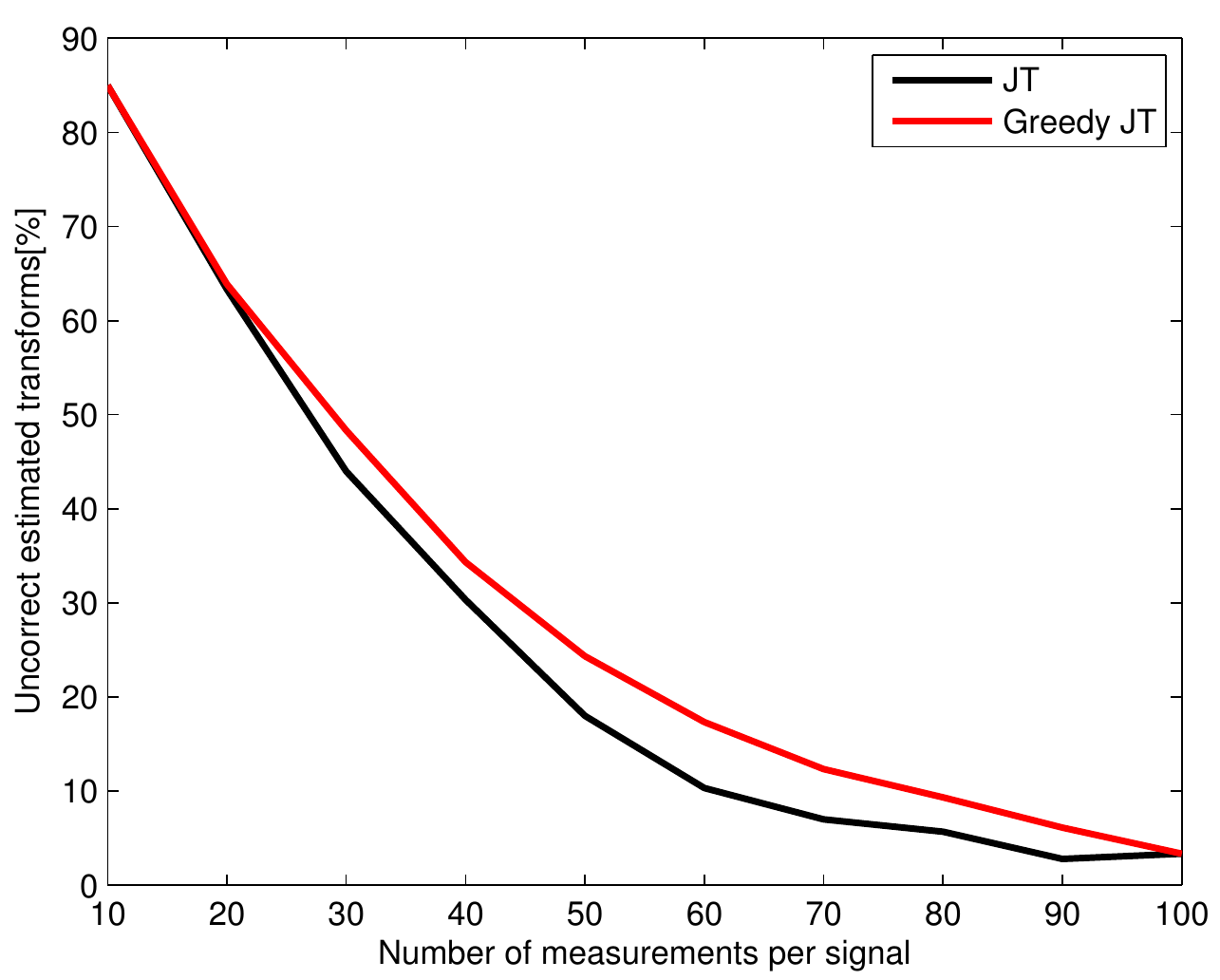}
\caption{Transforms estimations using JT and Greedy JT. Simulation setup: 20 independent trials, $J=4$, $S=5$, $N=32 \times 32$, Gaussian sensing matrices, independent transformations and $|\mathcal{T}| = 9^3$. The used dictionary given in section \ref{randomGeneratedImages}.}
\label{fig:greedy_fullsearch}
\end{figure}
Thus the penalty of using the greedy algorithm is small in practice.
In the following, we examine the performance of Greedy JT on synthetic images and seismic signals.




\subsection{Synthetic images}
\label{randomGeneratedImages}
We construct a parametric dictionary where a generating function undergoes rotation, scaling and translation operations to generate the different atoms in the dictionary $\Phi$. We use the Gaussian $g(x,y) = e^{-x^2 -y^2}$ as the generating function. The atoms in the dictionary are characterized by the rotation angle $\theta$, scales $s_x$ and $s_y$ and translations $t_x$ and $t_y$. If $(X,Y)$ denotes the transformed coordinate system:
\begin{align*}
X & = \frac{ (x-t_x) \cos \theta - (y-t_y) \sin \theta }{s_x} \\
Y & = \frac{ (y-t_y) \cos \theta + (x-t_x) \sin \theta }{s_y},
\end{align*}
the atom $g_p$ with parameters $p = (\theta, s_x, s_y, t_x, t_y)$ is given by:
\begin{equation*}
g_{p} (x, y) = \rho g (X, Y), 
\end{equation*}
where $\rho$ is the normalization constant.

The dictionary is generated for images of size $N = 32 \times 32 = 1024$, with the following parameters: $\theta \in [0:\frac{\pi}{6}:\pi], s_x = \{2, 4\}, s_y = \{1/2, 1\}$. Every atom is shifted in pixels of odd coordinates, so the full dictionary contains 6144 atoms.

The support of the reference image and coefficients are chosen in order to verify the conditions in Eq.(\ref{assumptionThresholding}) and Eq.(\ref{positivity}). The remaining images have been obtained by applying global translations on the atoms of the reference image, under the constraint that all atoms in images belong to $\mathcal{D}$. We assume that the transformations are independent from one another and that there are 9 candidate transformations for any image. Thus, $|\mathcal{T}| = 9^{J-1}$. 
\begin{figure}[tbh]
\centering
\includegraphics[width=0.6\columnwidth]{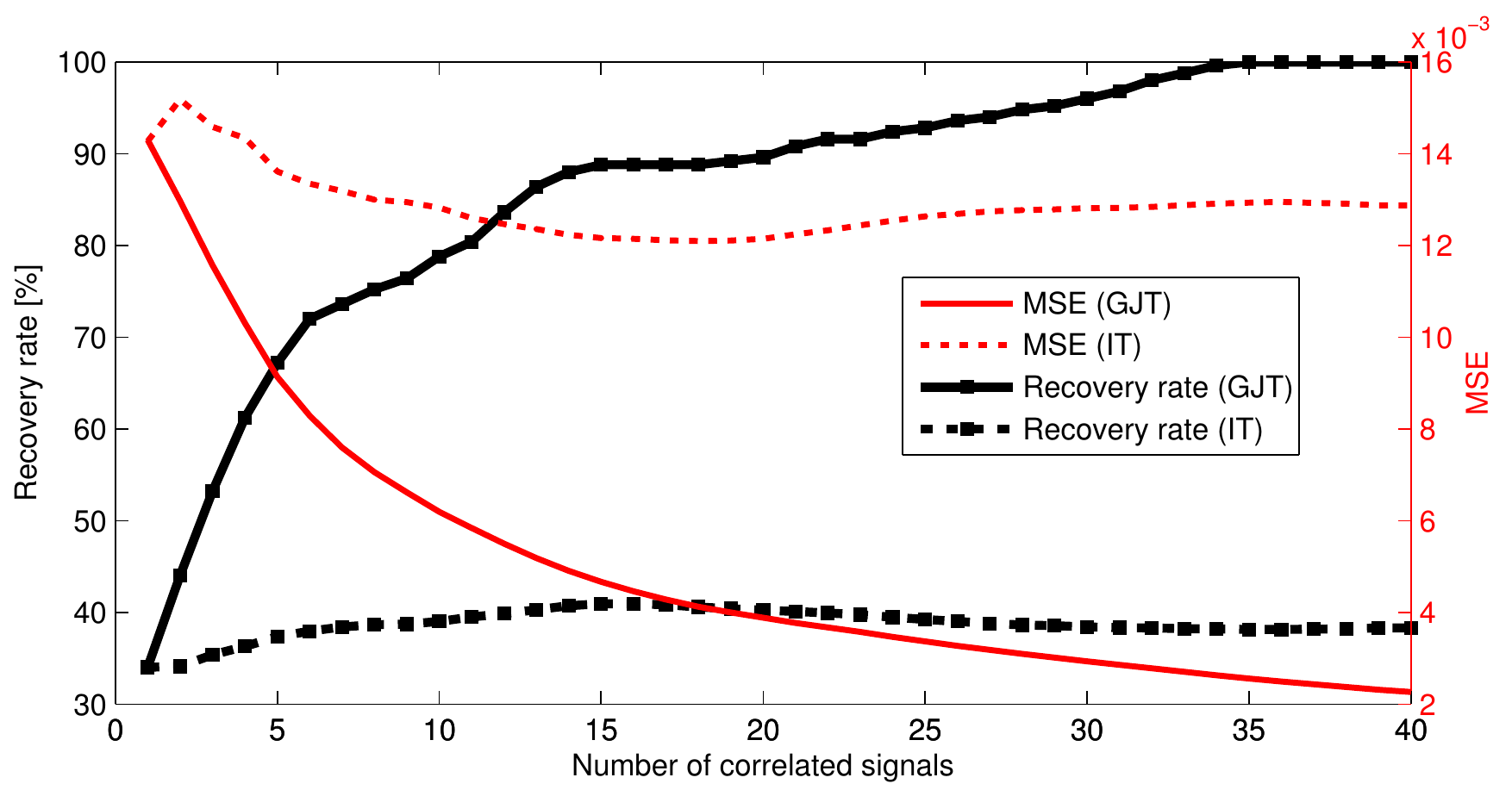}
\caption{\footnotesize{Recovery rate and Mean Squared Error of Greedy JT (GJT) and Independent thresholding (IT) in function of $J$. Simulation setup: 10 independent trials, $S = 5$, $M = 150$, $N = 1024$, Gaussian sensing matrices.}}
\label{fig:performance}
\vspace{-2mm}
\end{figure}
Fig.\ref{fig:performance} illustrates the recovery rate and MSE of Greedy JT and independent thresholding for a randomly generated image. Recovery rate is defined in Section \ref{quantitative}. For a given $J$, the calculated MSE represents the averaged MSE calculated over signals $\{1, \dots, J\}$. We see that Greedy JT outperforms independent thresholding in terms of recovery rate and image quality, especially for high values of $J$ ($J \geq 20$). Thus, although $|\mathcal{T}|$ grows rapidly with $J$, our joint decoding approach is significantly better in practice in terms of support recovery.
\subsection{1D seismic signals}

Seismic signals captured at neighbouring locations typically follow the correlation model proposed in this paper. Fig.\ref{fig:seismic} (a), (b) represent two seismic signals that are obviously correlated as the second signal is approximately a shifted version toward the front of the first signal. 
We use the following sparsifying dictionary, which consists of Gaussians modulated with sinusoids:
\begin{equation*}
g_{(t, s, \omega)}(x) = \rho \exp\left(-\frac{(x-t)^2}{s^2}\right) \cos\left(\omega \frac{x - t}{s}\right),
\end{equation*}
where $\rho$ is the normalizing constant.
The translations $t$ are chosen uniformly from $1$ to $N$ with step size 10 such that the coherence of the dictionary is not too high. Scales $s$ take values in $\{4, 8, 16\}$ and $\omega$ varies from $2$ to $10$ with step 2. For each set of parameters $(t, s, \omega)$, $g_{(t, s, \omega)}$ and $-g_{(t, s, \omega)}$ are included in the dictionary. Fig.\ref{fig:seismic} (c) and (d) illustrate the estimations of signal number 2 obtained with only $15\%$ of the measurements respectively using independent thresholding and JT algorithm. Note that as $J=2$ in this example, Greedy JT and JT are equivalent. Visual inspection and calculated MSEs confirm the superiority of joint decoding using JT algorithm over independent thresholding in terms of reconstruction quality. 
\begin{figure}[tbh]
\centering
\includegraphics[width=0.5\columnwidth]{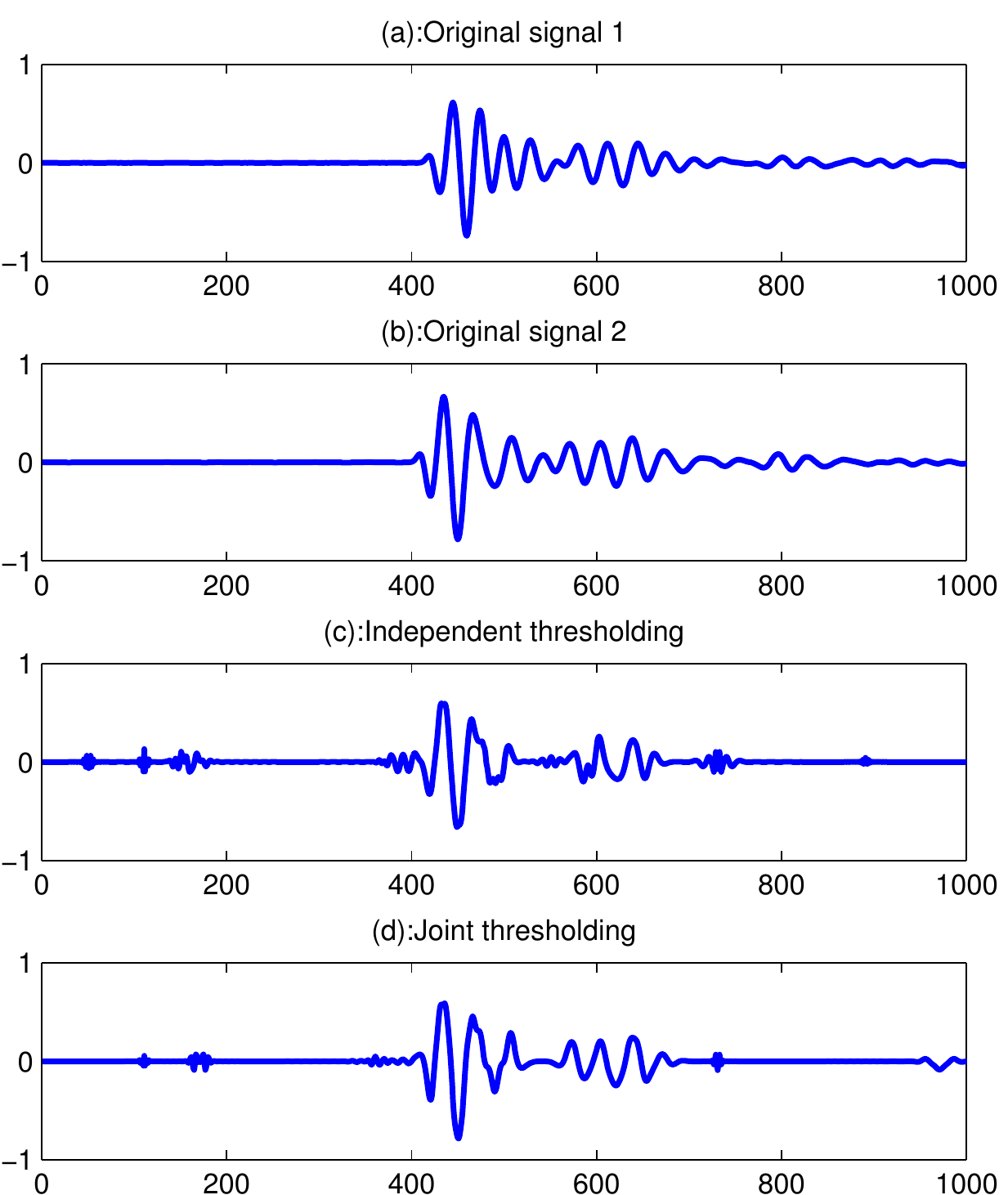}
\caption{\footnotesize{Seismic signals (a) $y_1$ and (b) $y_2$ captured at two neighbouring locations. Estimation of $y_2$ using (c) independent thresholding and (d) JT. Simulation setup: $J=2, N=1000, M=150, S=50$, $|\mathcal{T}| = 3$, Gaussian sensing matrices. This experiment was conducted 200 times and we obtained $\text{MSE}_{\text{IT}} = 0.0031$ and $\text{MSE}_{\text{JT}} = 0.0025$}.}
\label{fig:seismic}
\vspace{-2mm}
\end{figure}
This experiment shows that joint decoding using JT provides significantly better quality signals even when the number of correlated signals is low ($J=2$).
\section{Conclusion}\label{sec:conclusion}

In this paper, we have proposed an efficient approach for the joint recovery of correlated signals that have been compressed independently.
Our solution is novel with respect to the state of the art work due to the particular geometrical correlation model based on the transformations of the sparse signal components. 
Mathematical analysis and experimental results demonstrate the superiority of our recovery algorithm over independent thresholding. JT is namely applicable for decoding simple multiview images, seismic signals or any other set of correlated signals satisfying the geometric correlation model. A promising future direction is to use JT for correlation estimation along with a more sophisticated recovery algorithm for the reconstruction.
\section{Acknowledgments}\label{sec:acknowledgments}

The first author would like to thank Omar Fawzi for the fruitful discussions.

\clearpage
\addcontentsline{toc}{section}{Appendix}
\section*{Appendix A: Proof of Lemma \ref{fundLemma}}

This proof is inspired from the proof of \cite[Lemma 3.1]{Rauhut08}.

Let $(g_j)_{m, n}$ be a random variable following the standard gaussian distribution such that $(A_j)_{m, n} = \frac{1}{\sqrt{M}} (g_j)_{m, n}$.
We have, for any set of vectors $u_j, v_j$ in $\mathbb{R}^N$:
\begin{equation*}
\frac{1}{J} \sum\limits_{j=1}^J A_ju_j \cdot A_jv_j = \frac{1}{MJ} \sum\limits_{j=1}^J \sum\limits_{m=1}^M \sum\limits_{k=1}^N \sum\limits_{l=1}^N (g_j)_{m, k} (g_j)_{m, l} (u_j)_k (v_j)_l
\end{equation*}
Let $Y_{j, m} = \sum\limits_{k=1}^N \sum\limits_{l=1}^N (g_j)_{m, k} (g_j)_{m, l} (u_j)_k (v_j)_l$.
As $(g_j)_{m, k}$ is independent from $(g_j)_{m, l}$, we have $\mathbb{E}Y_{j,m} = u_j \cdot v_j$.
Let $Z_{j, m} = Y_{j, m} - \mathbb{E}Y_{j,m} = \sum\limits_{k \neq l} (g_j)_{m, k} (g_j)_{m, l} (u_j)_k (v_j)_l + \sum\limits_{k} (\left((g_j)_{m, k}\right)^2 - 1) (u_j)_k (v_j)_k$.
By definition, for any $j$ and $m$, $Z_{j, m}$ is a Gaussian chaos of order 2.

Observe that the probability we wish to bound can be expressed in terms of $Z_{j, m}$:
\begin{align*}
\mathbb{P} \left(\frac{1}{J} \left| \sum\limits_{j = 1}^J A_j u_j \cdot A_j v_j - u_j \cdot v_j \right| \geq \tau \right)
& = \mathbb{P} \left(\frac{1}{J} \left| \sum\limits_{j = 1}^J \frac{1}{M} \sum\limits_{m=1}^M \left( Y_{j, m} - \mathbb{E} Y_{j, m} \right) \right| \geq \tau \right) \\ &
= \mathbb{P} \left(\left| \sum\limits_{j = 1}^J \sum\limits_{m=1}^M Z_{j, m} \right| \geq \tau M J \right).
\end{align*}

As $Z_{j, m}$ is a Gaussian chaos of order 2, Bernstein's inequality on the sum of zero mean independent random variables with a certain moment growth is applicable (notice that the independence assumption is satisfied in our case as entries are iid and $\{A_j\}_{j=1}^J$ are independent). For more details about the theorem refer to Theorem A.1 in \cite{Rauhut08}.
We get:
\begin{align}
\label{bernstein}
\mathbb{P} \left(\left| \sum\limits_{j = 1}^J \sum\limits_{m=1}^M Z_{j, m} \right| \geq \tau M J \right) \leq 2 \exp \left( {-\frac{1}{2} \frac{J M \tau^2}{w + z \tau}} \right).
\end{align}
with $w = \max_{j, l} \mathbb{E} [Z_{j, l}^2] \frac{2e}{\sqrt{6 \pi}}$ and $z = e \sqrt{(\max_{j,l} \mathbb{E} [Z_{j, l}^2])}$.

By expanding $Z_{j, l}^2$, we calculate $\mathbb{E}[ Z_{j, l}^2 ]$ and obtain:
\begin{equation*}
\mathbb{E} [Z_{j, l}^2] = \|u_j\|_2^2 \|v_j\|_2^2 + (u_j \cdot v_j)^2 \leq 2 \|u_j\|_2^2 \|v_j\|_2^2 \leq 2 B_u^2 B_v^2
\end{equation*}
Thus, $w \leq \frac{4e}{\sqrt{6 \pi}} B_u^2 B_v^2$ and $z = \sqrt{2} e B_u B_v$.
We finally obtain the desired result by replacing the expressions of $w$ and $z$ in  Eq.(\ref{bernstein}).

\section*{Appendix B: Proof of Theorem \ref{mainTheorem}}

The proof is composed of 2 steps.

\begin{enumerate}
\item We first perform a simple calculation that will be needed in the second part of the proof.

Let $\{\widehat{\Delta_j}\}_{j=1}^J$ denote an estimated set of supports having $k$ incorrect atoms:
\begin{equation*}
\sum\limits_{j=1}^J |\Delta_j^* \cap \overline{\widehat{\Delta_j}}| = k.
\end{equation*} 
We have the equality:
\begin{align*}
& \sum\limits_{j=1}^J \sum_{ \varphi \in \Delta_j^* } y_j \cdot \varphi - \sum\limits_{j=1}^J \sum_{ \varphi \in \widehat{\Delta_j} } y_j \cdot \varphi = \sum\limits_{j=1}^J \sum_{ \varphi \in \Delta_j^* \cap \overline{\widehat{\Delta_j}} } y_j \cdot \varphi - \sum\limits_{j=1}^J \sum_{ \varphi \in \widehat{\Delta_j} \cap \overline{\Delta_j^*} } y_j \cdot \varphi,
\end{align*}
as correct atoms in the estimated supports cancel out.
Condition in Eq.(\ref{assumptionThresholding}), together with the positivity assumption in Eq.(\ref{positivity}) imply that:
\begin{equation}
\sum\limits_{j=1}^J \sum_{ \varphi \in \Delta_j^* } y_j \cdot \varphi - \sum\limits_{j=1}^J \sum_{ \varphi \in \widehat{\Delta_j} } y_j \cdot \varphi > \eta k m_y
\end{equation}
\item
For a fixed vector of transformations $T$, we define the set of supports $\{\Delta_j^T\}_{j=1}^J$ in the following way:
\begin{itemize}
	\item $\Delta_1^T$ as the set of $S$ atoms maximizing $d_T[i] = \sum_{j=1}^J s_j \cdot A_j T_j(\varphi_i)$ .
	\item For $j \geq 2$, $\Delta_j^T = T_j(\Delta_1^T)$
\end{itemize}

By definition, the support $\widehat{\Delta_1}$ is composed of the $S$ atoms maximizing $d_{\widehat{T}}$. Hence, $\widehat{\Delta_1} = \Delta_1^{\widehat{T}}$.


We say that the estimated supports are \emph{$h$-incorrect}, if the total number of incorrectly estimated atoms (in all signals combined) is at least equal to $h$:
\begin{equation*}
\sum\limits_{j=1}^J |\Delta_j^* \cap \overline{\widehat{\Delta_j}}| \geq h,
\end{equation*}
We consider the event $\{$ JT estimates $h-$incorrect supports $\}$ and we write the following equalities and inclusions on the events:
\begin{align}
\left\{ \text{JT estimates } h\text{-incorrect supports} \right\}
& = \left\{ (\widehat{\Delta_1}, \widehat{T})  = \argmax_{(\Delta_1, T)} \Psi_s(\Delta_1, T) \text{ verifies } \sum\limits_{j=1}^J \left|\Delta_j^* \cap \overline{\widehat{\Delta_j}}\right| \geq h \right\} \nonumber \\ & 
= \left\{ \widehat{T}  = \argmax_{T \in \mathcal{T}} \Psi_s(\Delta_1^T, T) \text{ verifies } \sum\limits_{j=1}^J \left| \Delta_j^* \cap \overline{\Delta_j^{\widehat{T}}} \right| \geq h \right\} \nonumber \\ &
\subset  \left\{ \exists T \in \mathcal{T} \text{ verifying } \Psi_s(\Delta_1^*, T^*) \leq \Psi_s(\Delta_1^T, T) \text{ and } \sum\limits_{j=1}^J \left| \Delta_j^* \cap \overline{\Delta_j^{T}} \right| \geq h \right\} \nonumber \\ &
\subset \bigcup_{T \in \mathcal{T}} \left\{ \Psi_s (\Delta_1^*, T^*) \leq \Psi_s(\Delta_1^T, T), \sum\limits_{j=1}^J  \left| \Delta_j^* \cap \overline{\Delta_j^{T}} \right| \geq h \right\}.
\end{align}

Thus,
\begin{align*}
\mathbb{P} \left(\text{JT estimates $h$-incorrect supports} \right)
& \leq \mathbb{P} \left( \bigcup_{T \in \mathcal{T}} \left\{ \Psi_s (\Delta_1^*, T^*) \leq \Psi_s (\Delta_1^T, T), \sum\limits_{j=1}^J  \left| \Delta_j^* \cap \overline{\Delta_j^T} \right| \geq h \right\} \right) \\
& \leq \sum\limits_{T \in \mathcal{T}} \mathbb{P} \left( \Psi_s (\Delta_1^*, T^*) \leq \Psi_s (\Delta_1^T, T), \sum\limits_{j=1}^J  \left|\Delta_j^* \cap \overline{\Delta_j^T} \right| \geq h \right) \\
& \leq \sum\limits_{T \in \mathcal{T}} \sum\limits_{k=\lceil h \rceil}^{SJ} \mathbb{P} \left( \Psi_s (\Delta_1^*, T^*) \leq \Psi_s (\Delta_1^T, T), \sum\limits_{j=1}^J | \Delta_j^* \cap  \overline{\Delta_j^T} | = k \right). \\
\end{align*}
From the first part of the proof, we know that if $\sum\limits_{j=1}^J \left| \Delta_j^* \cap  \overline{\Delta_j^T} \right| = k$, then, $\Psi_y(\Delta_1^*, T^*) - \Psi_y(\Delta_1^T, T) > \eta k m_y$, where $\Psi_y$ is defined in Section 3. Hence, the following inclusion holds:
\begin{align*}
& \left\{ \Psi_s (\Delta_1^*, T^*) \leq \Psi_s (\Delta_1^T, T), \sum\limits_{j=1}^J \left| \Delta_j^* \cap  \overline{\Delta_j^T} \right| = k \right\} \\  \subset & \left\{ \Psi_s (\Delta_1^*, T^*) \leq \Psi_y (\Delta_1^*, T^*) - \frac{\eta k m_y}{2} \right\} \bigcup \left\{ \Psi_s (\Delta_1^T, T) \geq \Psi_y (\Delta_1^T, T) + \frac{\eta k m_y}{2} \right\}.
\end{align*}

Hence,
\begin{align}
\mathbb{P} & \left( \Psi_s (\Delta_1^*, T^*) \leq \Psi_s (\Delta_1^T, T),  \sum\limits_{j=1}^J \left| \Delta_j^* \cap  \overline{\Delta_j^T} \right| = k \right) \label{pALL} \\ \leq
\mathbb{P} & \left( \sum\limits_{j=1}^J \sum_{ \varphi \in \Delta_j^* } s_j \cdot A_j \varphi \leq \sum\limits_{j=1}^J \sum_{ \varphi \in \Delta_j^* } y_j \cdot \varphi - \frac{\eta k m_y }{2} \right) \nonumber +
\mathbb{P} \left( \sum\limits_{j=1}^J \sum_{ \varphi \in \Delta_j^T } s_j \cdot A_j \varphi \geq \sum\limits_{j=1}^J \sum_{ \varphi \in \Delta_j^T } y_j \cdot \varphi + \frac{\eta k m_y}{2} \right) \nonumber \\ \leq
\mathbb{P} & \left( \exists \varphi \in \Delta_1^*, \sum\limits_{j=1}^J y_j \cdot T_j^* (\varphi) - s_j \cdot A_j T_j^*(\varphi) \geq \frac{\eta k m_y}{2S} \right) \label{p1}  \\ +
\mathbb{P} & \left( \exists \varphi \in \Delta_1^T, \sum\limits_{j=1}^J s_j \cdot A_j T_j(\varphi) - y_j \cdot T_j(\varphi) \geq \frac{\eta k m_y}{2S} \right) \label{p2}
\end{align}

We rewrite probability in Eq.(\ref{p1}) and apply Lemma \ref{fundLemma}:
\begin{align*}
\mathbb{P} \left( \bigcup_{\varphi \in \Delta_1^*} \left\{ \sum\limits_{j=1}^J y_j \cdot T_j^*(\varphi) - s_j \cdot A_j T_j^*(\varphi) \geq \frac{\eta k m_y}{2S} \right\} \right) &
\leq \sum_{\varphi \in \mathcal{D}} \mathbb{P} \left( \frac{1}{J} \left| \sum\limits_{j=1}^J  s_j \cdot A_j T_j^*(\varphi) - y_j \cdot T_j^*(\varphi) \right| \geq \frac{\eta k m_y}{2SJ} \right) \\ &
\leq 2 K \exp {\left( - \frac{M k^2 \eta^2}{4C_1 S^2 J + 2C_2\eta k S} \frac{m_y^2}{M_y^2} \right)}
\end{align*}

Similarly for Eq.(\ref{p2}),
\begin{align*}
\mathbb{P} \left( \bigcup_{\varphi \in \Delta_1^T} \left\{ \sum\limits_{j=1}^J s_j \cdot A_j T_j(\varphi) - y_j \cdot T_j(\varphi) \geq \frac{\eta k m_y}{2S} \right\} \right) &
\leq \sum_{\varphi \in \mathcal{D}} \mathbb{P} \left( \frac{1}{J} \left| \sum\limits_{j=1}^J  s_j \cdot A_j T_j(\varphi) - y_j \cdot T_j(\varphi) \right| \geq \frac{\eta k m_y}{2SJ} \right) \\ &
\leq 2 K \exp {\left( - \frac{M k^2 \eta^2}{4C_1 S^2 J + 2C_2\eta k S} \frac{m_y^2}{M_y^2} \right)}
\end{align*}
Thus, by combining both results, we obtain:
\begin{align*}
\mathbb{P} \left(\text{JT estimates } h\text{-incorrect supports} \right) & \leq \sum\limits_{T \in \mathcal{T}} \sum\limits_{k=\lceil h \rceil}^{SJ} 4K \exp {\left( - \frac{M k^2 \eta^2}{4C_1 S^2 J + 2C_2\eta k S} \frac{m_y^2}{M_y^2} \right)} \\ & \leq
4 SJK |\mathcal{T}| \exp {\left( - \frac{M h^2 \eta^2}{4C_1 S^2 J + 2C_2\eta h S} \frac{m_y^2}{M_y^2} \right)}
\end{align*}

\end{enumerate}
Let $h = \alpha S J$, then, we have:
\begin{align*}
\mathbb{P} \left(\text{JT estimates } (\alpha S J)\text{-incorrect supports} \right) & = \mathbb{P} \left( \frac{\text{Number of errors}}{SJ} \geq \alpha \right) \\ & = \mathbb{P} (R \leq 1 - \alpha) \\ & \leq 4 SJK |\mathcal{T}| \exp {\left( - \frac{M J \alpha^2 \eta^2}{4C_1 + 2C_2 \eta \alpha } \frac{m_y^2}{M_y^2} \right)}
\\ & \leq 4 SJK |\mathcal{T}| \exp {\left( - C M J \alpha^2 \eta^2 \frac{m_y^2}{M_y^2} \right)}
\end{align*}
with $C = (4C_1 + 2C_2)^{-1} = \left(\frac{32 e}{\sqrt{6\pi}} + 4 e \sqrt{2} \right)^{-1}$.
The result of Theorem \ref{mainTheorem} is finally obtained by taking the probability on the complementary event.
\newpage
\bibliographystyle{IEEEbib}
\bibliography{refs_technicalReport}

\end{document}